\documentclass[
final
 , nomarks
]{dmtcs-episciences}

\usepackage{graphicx}
\usepackage{mathptmx}
\usepackage{latexsym}
\usepackage{amssymb}
\usepackage{amstext}
\usepackage{algpseudocode}
\usepackage{algorithm}

\newtheorem{thm}{Theorem}
\newtheorem{theorem}[thm]{Theorem}

\newcommand{\Real}{\mathbb{R}}
\newtheorem{obs}{Observation}

%


\author[Hossein Boomari and Mojtaba Ostovari and Alireza Zarei]{
Hossein Boomari\affiliationmark{1}
\and 
Mojtaba Ostovari\affiliationmark{1}
\and 
Alireza Zarei\affiliationmark{1}
}
\title[]{Recognizing Visibility Graphs of Polygons with Holes and Internal-External Visibility Graphs of Polygons}
\affiliation{
  Department of Mathematical Sciences\\Sharif University of Technology
}
\keywords{graph computational complexity, geometric graphs, Visibility graph, recognizing visibility graph, existential theory of the reals}
\begin{document}
\maketitle
\begin{abstract}
The visibility graph of a polygon corresponds to its internal diagonals and boundary edges. For each vertex on the boundary of the polygon, we have a vertex in this graph and if two vertices of the polygon see each other, there is an edge between their corresponding vertices in the graph. Two vertices of a polygon see each other if and only if their connecting line segment completely lies inside the polygon, and they are externally visible if and only if this line segment completely lies outside the polygon. Recognizing visibility graphs is the problem of deciding whether there is a simple polygon whose visibility graph is isomorphic to a given input graph. This problem is well-known and well-studied, but yet widely open in geometric graphs and computational geometry. 

Existential Theory of the Reals is the complexity class of problems that can be reduced to the problem of deciding whether there exists a solution to a quantifier-free formula $F(X_1,X_2,...,X_n)$, involving equality and inequality of real polynomials with real variables. The complete problems for this complexity class are called $\exists \Real-Complete$.

In this paper, we show that recognizing visibility graphs of polygons with holes is $\exists \Real-Complete$; recognizing visibility graphs of simple polygons is $\exists \Real-Complete$ when we have both internal and external visibility graphs; recognizing visibility graphs of simple polygons is $\exists \Real-Complete$ when we have both internal and external visibility graphs and a \textit{BlockingVertexAssignment}.
\end{abstract}

\renewcommand\thefootnote{}

\renewcommand\thefootnote{\fnsymbol{footnote}}
\setcounter{footnote}{1}

\section{Introduction}
\label{intro}
The visibility graph of a simple planar polygon is a graph in which there is a vertex for each vertex of the polygon and for each pair of visible vertices of the polygon there is an edge between their corresponding vertices in this graph. Two points in a simple polygon are internally (resp. externally) visible from each other when their connecting segment completely lies inside (resp.  outside) the polygon. In this definition, each pair of adjacent vertices on the boundary of the polygon is assumed to be visible from each other. This implies that we always have a Hamiltonian cycle in a visibility graph which determines the order of vertices on the boundary of the underlying polygon. A polygon with holes has some non-intersecting holes inside its boundary. In these polygons, the area inside a hole is considered as the outside area. Then, internal and external visibility graphs of such polygons are defined in the same way as defined for simple polygons. In the visibility graph of a polygon with holes, we have the sequence of vertices corresponding to the boundary of each hole, as well. 

Computing the visibility graph of a given simple polygon has many applications in computer graphics~\cite{CG}, computational geometry~\cite{ghosh-book} and robotics~\cite{robot}. There are several efficient polynomial time algorithms for this problem~\cite{ghosh-book}.

This concept has been considered in reverse as well: Is there any simple polygon whose visibility graph is isomorphic to a given graph, and, if there is such a polygon, is there any way to reconstruct it (finding positions for its vertices in the plane)? The former problem is known as recognizing visibility graphs and the latter one is known as reconstructing a polygon from a given visibility graph. The computational complexities of both these problems are widely open. The only known result about the computational complexity of these problems is that they belong to \textit{PSPACE}~\cite{everet-thesis} complexity class. More precisely, they belong to the class of \textit{Existential theory of the reals}~\cite{exist}. This means that it is not even known whether these problems are in \textit{NP} or can be solved in polynomial time. Even, if we are given the Hamiltonian cycle of the visibility graph which determines the order of vertices on the boundary of the target polygon, the exact complexity class of these problems is still unknown.

However, these problems have been solved efficiently only for special cases of {\it spiral}, {\it tower}, and {\it anchor} polygons. These results are obtained by Everett and Corneil~\cite{spiral} for spiral polygons, by Colley~{\it et al.}~\cite{tower} for tower polygons, and by Boomari and Zarei~\cite{DBLP:journals/jgaa/BoomariZ22} for anchor polygons. In a spiral polygon, there is at most one concave chain, the boundary of a tower polygon is composed of two concave chains and a single edge, and an anchor polygon is a tower polygon whose base edge is a convex chain. The recognizing and reconstruction problems have been solved for these special cases in linear time in terms of the size of the graph.

Although there is some progress in recognizing and reconstructing problems, there have been plenty of studies on characterizing visibility graphs. In 1988, Ghosh introduced three necessary conditions for visibility graphs and conjectured their sufficiency~\cite{ghosh3}. In 1990, Everett proposed a graph that rejects Ghosh's conjecture~\cite{everet-thesis}. She also refined Ghosh's third necessary condition to a new stronger one~\cite{ghoshn}. In 1992, Abello \textit{et al}. built a graph satisfying Ghosh's conditions and the stronger version of the third condition which was not the visibility graph of any simple polygon~\cite{counter3}, disproving the sufficiency of these conditions. In 1997, Ghosh added his fourth necessary condition and conjectured that this condition along with his first two conditions and the stronger version of the third condition are sufficient for a graph to be a visibility graph. Finally, in 2005 Streinu proposed a counter-example for this conjecture~\cite{counter5}. Independently, in 1994, Abello \textit{et al.} introduced the notion of \textit{q-persistant graphs} and \textit{BlockingVertexAssignment}. The \textit{BlockingVertexAssignment} is a proper function from non-visible pairs to their blocking vertices that satisfies four conditions. They proved that each visibility graph has at least one \textit{BlockingVertexAssignment}. They conjectured that these constraints are verifiable efficiently~\cite{matroid}. However, there is still no polynomial time algorithm for finding or verifying the existence of a \textit{BlockingVertexAssignment}, and their complexity classes are open. Later in 1995,  Abello \textit{et al.} added one more constraint to these constraints and proved their sufficiency for recognition and reconstruction of 2-spiral polygons\footnote{polygons with at most 2 concave chains} and proposed an efficient method for recognition and reconstruction problems for such polygons when we have the Hamiltonian cycle and a \textit{BlockingVertexAssignment} as input~\cite{twospiral}. But by now, there is no efficient algorithm for obtaining a \textit{BlockingVertexAssignment} for 2-spiral polygons from which the recognizing and reconstruction problems could be solved efficiently.

Existential theory of the reals ($\exists \Real$) is a complexity class that was implicitly introduced in 1989~\cite{npnpc} and explicitly defined by Shor in 1991~\cite{shor1991}. In 2009~\cite{schaefer}, Schaefer used the notation $\exists \Real$ for this class and showed that it is the complexity class of the problems which can be reduced to the problem of deciding, whether there is a solution for a Boolean formula $\phi:\{True,False\}^n\rightarrow \{True,False\}$ in propositional logic, in the form $\phi(F_1(X_1,X_2,...,X_N) , F_2(X_1,X_2,...,X_N), ..., F_n(X_1,X_2,...,X_N))$,
where each $F_i:\Real^N \rightarrow \{True, False\}$ consists of a polynomial function $G_i:\Real^N:\rightarrow \Real$ on some real variables, compared to $0$ with one of the comparison operators in $\{<,\leq,=,>,\geq\}$ (for example $G_i(X_1,X_2)=X_1^3X_2^2-X_1X_2^3$ and $F_i(X_1,X_2) \equiv G_i(X_1,X_2)<0$). Equivalently, it is the complexity class of the problem of deciding the emptiness of a semialgebraic set~\cite{exist}. Satisfiability of quantifier-free Boolean formula belongs to $\exists \Real$. Therefore, $\exists \Real$ includes all $NP$ problems. In addition, $\exists \Real$ strictly belongs to $PSPACE$~\cite{exist} and we have $NP \subseteq \exists \Real \subset PSPACE$. Many other decision problems, especially geometric problems, belong to $\exists \Real$ and some are complete for this complexity class~\cite{artgallery,ERP,embedding,space,multinash,quantom,planar,PVG}. Recognizing \textit{LineArrangement} (\textit{Stretchability}), simple order type, intersection graph of segments, intersection graph of unit disks, and visibility graph of a point set in the plane are some problems which are complete for $\exists \Real$ or simply $\exists \Real-Complete$~\cite{ERP,PVG}. We discuss recognizing \textit{LineArrangement} (\textit{Stretchability}) problem in more detail in this paper in Section~\ref{sec:B}.
In the most recent related result, in 2020 we showed that recognizing the visibility graph of triangulated irregular networks is $\exists \Real-Complete$~\cite{TIN}.

In this paper, we show that recognizing the visibility graph of a polygon with holes is $\exists \Real \text{-}Complete$. However, this result has been obtained historically at the same time by Hoffmann and Merckx\cite{hoffmann} using a completely different method\footnote{We first submitted this result in Dec-2017 to SOCG2018 and then published it on arXiv in Apr-2018\cite{ours}. However, their result which was published on arxiv in January 2018, has not been published as a peer-reviewed document anywhere, yet. In addition to the completely different approaches, we have extended our result to the simple polygons( without holes) when the external visibility graph is known as well. Moreover, in our proof, the problem is still $\exists\Real \text{-}Completeness$ even if we have the correct order of the vertices on the boundary of the target polygon.}. They first prove the $\exists\Real \text{-}Completeness$ of recognizing the \textit{AllowableSequences} and then reduce this problem to recognizing the visibility graphs of a polygon with holes. Compared to this result our method uses a simpler approach towards proving the $\exists\Real \text{-}Completeness$ of realizing the visibility graph of polygons with holes. In addition, we prove the $\exists\Real \text{-}Completeness$ of realizing the Internal-External visibility graph of simple polygons as well. Precisely, we show that recognizing the visibility graph of a simple polygon is also $\exists\Real \text{-}Complete$ when we have both its internal and external visibility graphs as input. We also prove that recognizing the visibility graph problem for simple polygons is still $\exists \Real\text{-}Complete$ when we have both internal and external visibility graphs and a \textit{BlockingVertexAssignment} function as input.
Based on these results, we conjecture that the main open problem, recognizing visibility graphs of simple polygons, is also $\exists \Real\text{-}Complete$.
\section{Preliminaries and Definitions}
\label{sec:B}
In this section, we give a brief survey on Ghosh's necessary conditions for a visibility graph and describe the problems of recognizing \textit{LineArrangement} and \textit{Stretchability} in the plane. We need these details in some parts of our proofs. At the end of this section, we introduce the required definitions and basic facts and observations to be used in the rest of the paper.

\subsection{Ghosh's necessary conditions}
As stated before, there are 4 necessary but not sufficient conditions that a graph must have to be the visibility graph of a simple polygon. These conditions are defined on the input visibility graph and a Hamiltonian cycle which is assumed to be the order of vertices on the boundary of the target polygon. We review the first two conditions here, briefly.

\subsubsection{First necessary condition}
Each ordered cycle of the visibility graph of length more than three has some chords. The order of vertices in such a cycle must follow the Hamiltonian cycle, and a chord is an edge between two non-adjacent vertices of the cycle. This condition is a consequence of the fact that each simple polygon has a triangulation.

\subsubsection{Second necessary condition}
\label{cond:2}
Each non-visible pair of vertices has a blocking vertex.  A vertex $o$ in a visibility graph is a blocking vertex for a non-visible pair of vertices $(p,q)$, if all vertices between $p$ and $o$ (including $p$), on the Hamiltonian cycle, are non-visible to all vertices between $o$ and $q$ (including $q$). 

\begin{obs}
For a non-visible pair $(p,q)$, there are at least one and at most two blocking vertices that are visible from $p$. These candidates are the last visible vertices in the clockwise and counter-clockwise walks, from $p$ toward $q$ along the Hamiltonian cycle.
\end{obs}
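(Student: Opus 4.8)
The plan is to split the statement into an upper bound (at most two blocking vertices are visible from $p$, and they can only be the two last-visible vertices) and a lower bound (at least one such vertex exists). The two halves have very different flavours: the upper bound is purely combinatorial and follows from the definition, whereas the lower bound needs the underlying geometry.

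For the upper bound I would argue as follows. The Hamiltonian cycle splits into two arcs from $p$ to $q$; call them the clockwise arc $\alpha$ and the counter-clockwise arc $\beta$. Suppose $o$ is any blocking vertex that is visible from $p$, and say $o$ lies on $\alpha$. Since $p$ itself belongs to the first group $[p,o)$ of the blocking condition, the definition forces $p$ to be invisible from every vertex strictly between $o$ and $q$ on $\alpha$. Hence no vertex of $\alpha$ strictly after $o$ is visible from $p$, i.e. $o$ is exactly the last vertex of $\alpha$, scanning from $p$ toward $q$, that is visible from $p$. This pins $o$ down uniquely on $\alpha$, and symmetrically on $\beta$, so there are at most two blocking vertices visible from $p$, and they must be the two last-visible vertices named in the statement.

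For the lower bound (existence) I would pass to the geometry via the visibility region of $p$. Because $(p,q)$ is an invisible pair, $q$ lies outside this region, hence inside some pocket cut off from $p$ by a window $W$. A window is a chord emanating from a reflex vertex $r$ that is visible from $p$, directed along the ray from $p$ through $r$, so that $p$, $r$ and $W$ are collinear; the whole pocket behind $W$, which contains $q$ and a contiguous boundary chain $C$, is invisible from $p$. Walking from $p$ toward $q$ along the arc that carries $r$, the vertex $r$ is visible while every vertex after it up to $q$ lies in the pocket and is therefore invisible from $p$; thus $r$ coincides with the last-visible vertex on that arc. It then remains to verify that $r$ genuinely satisfies the full blocking condition, namely that the entire near chain $[p,r)$ is invisible from the entire far chain $(r,q]\subseteq C$. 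For this I would use that the chord $W$ separates the polygon into the pocket and its complement, so any segment joining a near-chain vertex to a pocket vertex must cross $W$; combining this separation with the collinearity of $W$ with $p$ and $r$ and with the reflexivity of $r$ rules out such a crossing staying inside the polygon, which yields the required mutual invisibility and hence a blocking vertex visible from $p$.

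I expect the last step, upgrading ``$p$ sees nothing beyond $r$'' to the full mutual invisibility of the two whole chains, to be the main obstacle. The subtlety is that the near chain $[p,r)$ may itself bend around and is not contained in a halfplane bounded by the window line (that line passes through $p$), so a naive halfplane-separation argument does not suffice; one genuinely needs the chord-separation property of $W$ together with the fact that $r$ is a reflex window vertex. A secondary point to handle with care is that the visible vertices of an arc need not form a prefix, so ``last visible vertex'' must be defined by scanning back from $q$, and one must confirm that this scan still selects the window vertex; should the candidate on one arc fail the full condition, the pocket containing $q$ dictates which arc supplies the guaranteed blocking vertex.
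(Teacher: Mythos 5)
The paper states this Observation without proof, treating it as a known consequence of Ghosh's second necessary condition and of the characterization of blocking vertices as first turning points of Euclidean shortest paths (recalled in the paper's appendix), so there is no in-paper argument to compare against; I therefore assess your proposal on its own terms. Your upper-bound half is complete and correct: since $p$ itself belongs to the near chain of any blocking vertex $o$ visible from $p$, every vertex strictly between $o$ and $q$ on that arc is invisible from $p$, which pins $o$ down as the last $p$-visible vertex of its arc and yields at most one candidate per arc, hence at most two in total.

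The existence half has a genuine gap, and it sits exactly where you flag it. Knowing that a segment from $u\in[p,r)$ to $v\in(r,q]$ must cross the window $W$ does not by itself ``rule out such a crossing staying inside the polygon'': nothing in the collinearity of $W$ with $p$ and $r$, nor in the reflexivity of $r$, prevents a near-chain vertex $u$ from peering through the interior of $W$ into the pocket, so as written the mutual invisibility of the two chains is asserted rather than proved. The missing ingredient is a \emph{second} forced crossing. Because $u$ and $v$ straddle $r$ on one arc of the boundary while $p$ (and $q$) lie on the complementary arc, the chord $uv$ splits the polygon into two sub-polygons with $r$ strictly in one and $p$ strictly in the other; hence $uv$ must also cross the segment $pr$. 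Combined with its crossing of $W$, which lies on the ray from $p$ through $r$ beyond $r$, the segment $uv$ would meet the line through $p$ and $r$ in two distinct points --- impossible for a non-degenerate segment. Adding this step (or simply invoking Ghosh's lemma that the first turning vertex of the shortest path from $p$ to $q$ --- which coincides with your window vertex $r$ --- blocks the pair) closes the argument; a few degenerate cases, such as $uv$ passing exactly through $r$, still deserve a sentence.
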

\begin{obs}
\label{obs:2}
For a non-visible pair $(p,q)$, on a walk $W$ from $p$ to $q$ along the boundary cycle, if the first vertex that is visible from $q$ is before the last vertex that is visible from $p$, then $W$ has no $(p,q)$ blocking vertex. 
\end{obs}

\subsection{Line arrangement and stretchability}
\label{sec:stret}
For a set of lines in the plane, the problem of describing their arrangement, called \textit{LineArrangement}, is important and fundamental in combinatorics and has been considered thoroughly in computational geometry. This description for a set of lines $l_1,l_2,...,l_n$ consists of their vertical order with respect to a vertical line on the left of all their intersections, and for each line $l_i$, the order of lines that are intersected by $l_i$ when we traverse $l_i$ from left to right (we assume that none of the input lines $l_i$ is vertical). Recognizing the existence of a set of lines in the plane with the given \textit{LineArrangement} is called \textit{recognizing LineArrangement} or simply \textit{LineArrangement} problem. When the lines are in general position (all pairs of lines intersect and no 3 lines intersect at the same point) the problem is called \textit{SimpleLineArrangement}. It has been proved that \textit{SimpleLineArrangement} is $\exists \Real \text{-}Complete$~\cite{ERP}. 

A pseudo-line is a monotone curve with respect to the $X$ axis (each vertical line intersects the curve at exactly one point). Assuming that no pair of pseudo-lines intersect each other more than once, we can describe an instance of recognizing \textit{PseudoLineArrangement} problem in the same way as we did for \textit{LineArrangement}.  However, recognizing \textit{PseudoLineArrangement} belongs to the $P$ complexity class and it can be decided with a Turing machine in polynomial time~\cite{allow}. Because we need such a realization algorithm, a pseudo-code implementation of this algorithm has been given in Algorithm~\ref{alg:A}.

In this algorithm, we reconstruct each pseudo-line as a chain of line segments. This algorithm receives as input, the initial vertical order of pseudo-lines and a queue $S_i$ for each pseudo-line that contains the left to right order of intersections of this pseudo-line with other pseudo-lines. There is no duplicate member in $S_i$'s, otherwise, $l_i$ must intersect another pseudo-line $l_j$ more than once which is a contradiction and in these cases, the algorithm rejects the input. The algorithm starts with $n$ empty queues, $L_i$'s, each of which will contain the sequence of vertices of the corresponding chain of the pseudo-line $l_i$. The point $(0,i)$ is inserted into $L_i$ as the initial point of $l_i$. Then, in each step, the algorithm finds two pseudo-lines that intersect and swaps their order along an imaginary vertical sweep line that moves from left to right. This algorithm, for each intersection between a pair of pseudo-lines $l_p$ and $l_q$, adds three points to $L_p$ and $L_q$ to swap their order(See Fig.~\ref{fig:1}-b). When the algorithm cannot find a  proper pair of pseudo-lines to swap, it means that the input is not recognizable and the input is rejected. As Fig.~\ref{fig:1}-c shows, when there is more than one choice for $(p,q)$, anyone can be selected and it does not affect the rest of the algorithm.

This algorithm recognizes and reconstructs a \textit{PseudoLineArrangement} in polynomial time and obtains a set of pseudo-lines, which their break-point vertices do not necessarily correspond to their intersections. For example, point $P$ in Fig.~\ref{fig:1}-b is not an intersection between the pseudo-lines. It is simple to show that we can remove these non-intersection break-point vertices from the pseudo-lines without violating input configuration constraints. Fig.~\ref{fig:1}-d shows how removing these extra break-point vertices from the chains produces new \textit{PseudoLineArrangement} which have the same order of intersections as the input configuration.

Trivially, if an instance of the \textit{LineArrangement} problem is realizable, it has a \textit{PseudoLineArrangement} realization as well. On the other hand, if an instance of the \textit{PseudoLineArrangement} has a realization in which all segments of each pseudo-line lie on the same line, the input instance has also a \textit{LineArrangement} realization as well. 

\begin{figure}[ht]
\centerline{\includegraphics[scale=0.75]{./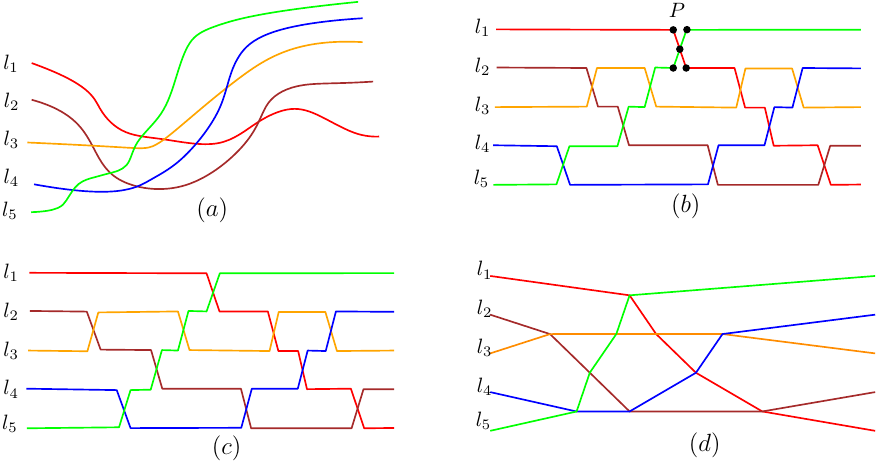}}
\caption{The reconstruction algorithm for \textit{PseodoLineArrangement}.\label{fig:1} }
\end{figure}

Therefore, we  can describe the \textit{LineArrangement} problem as follows:
\begin{itemize}
\item Is it possible to stretch a \textit{PseudoLineArrangement} of a given line arrangement description such that each pseudo-line lies on a single line? 
\end{itemize}
This problem is known as \textit{Stretchability}. As stated before, pseudo-line arrangement belongs to the $P$ complexity class and can be recognized and reconstructed efficiently(Algorithm~\ref{alg:A}). Therefore, $\exists \Real\text{-}Completeness$ of \textit{LineArrangement} implies that \textit{Stretchability} is $\exists \Real\text{-}Complete$.

\begin{algorithm}[ht]
\centerline{\includegraphics[scale=1]{./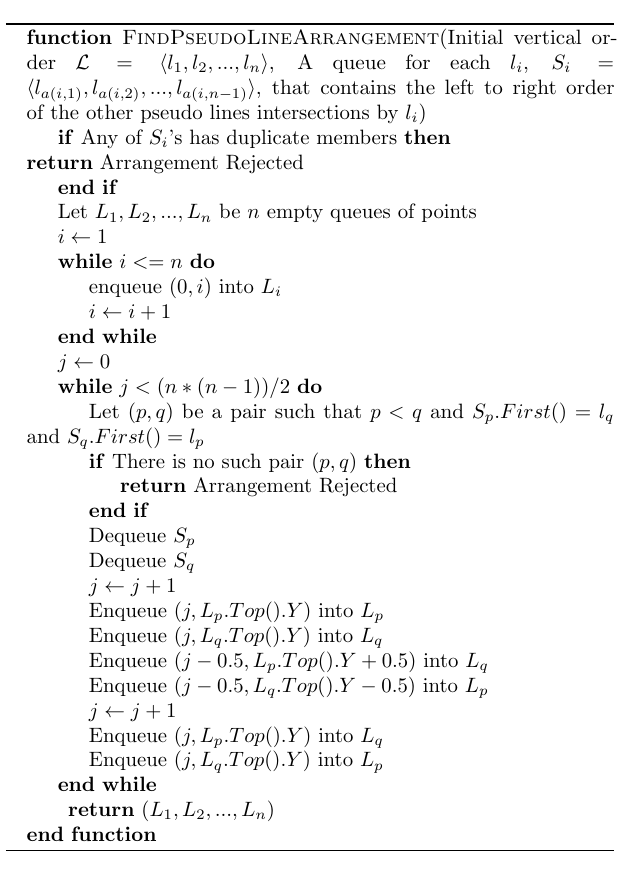}}
\caption{Recognizing and Reconstruction algorithm for \textit{PseudoLineArrangement}\label{alg:A}}
\end{algorithm}

\subsection{Visibility graph of a polygon with holes}
A polygon with holes is a simple polygon with a set of non-colliding holes (simple polygons) inside. The internal areas of the holes belong to the outside area of the polygon. In these polygons, two vertices are visible from each other if their connecting segment completely lies inside the polygon. The visibility graph of a polygon with holes is a graph whose vertices correspond to vertices of the outer boundary of the polygon and vertices of the holes, and in this graph, there is an edge between two vertices if and only if their corresponding vertices in the polygon are visible from each other (see Fig.~\ref{fig:2}). In this paper, we assume that along with the visibility graph, we have the cycles that correspond to the order of vertices on the boundary of the polygon and the holes. The cycle that corresponds to the external boundary of the polygon is called the external cycle(see Fig.~\ref{fig:2}).

\begin{figure}[ht]
\centerline{\includegraphics[scale=0.45]{./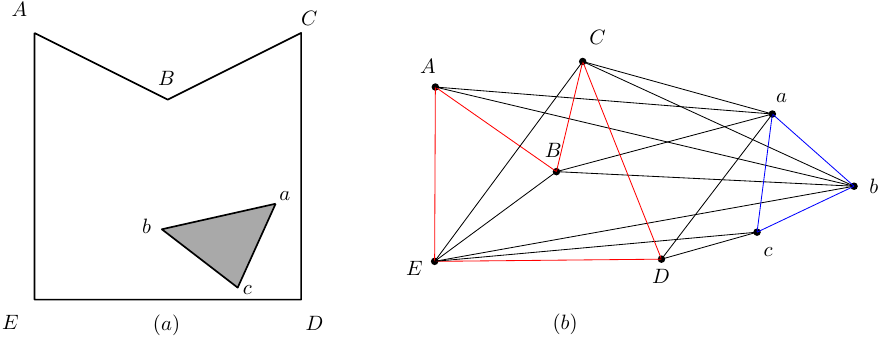}}
\caption{A polygon with one hole (a), and its visibility graph (b).\label{fig:2} }
\end{figure}

\subsection{Internal-external visibility graphs}
When we say that two points of a polygon are visible, it means that they see each other inside the polygon. However, the visibility can be defined similarly for the outside of the polygon. When we are going to talk about both of these visibilities, the former is called internal visibility and the latter is called external visibility. Precisely, two vertices are externally visible to each other if their connecting segment lies outside the polygon. Similarly, the external visibility graph of a polygon is a graph whose vertices correspond to the vertices of the polygon, and its edges correspond to the external visibility relations. Having both these graphs separately is called the internal-external visibility graphs of a polygon(see Fig.~\ref{fig:3}).

\begin{figure}[ht]
\centerline{\includegraphics[scale=0.45]{./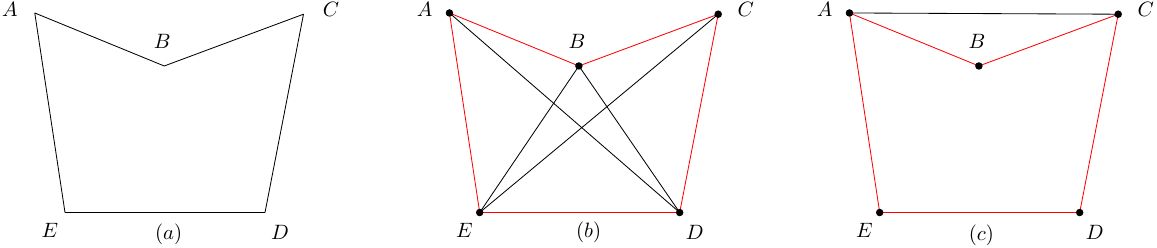}}
\caption{A polygon (a), its internal (b) and external (c) visibility graphs.\label{fig:3} }
\end{figure}

\section{Complexity of Recognizing Visibility Graphs of Polygons with Holes}
\label{sec:C}
In this section, we show that recognizing the visibility graph of a polygon with holes is $\exists \Real\text{-}Complete$. This is done by reducing an instance of the stretchability problem to an instance of this problem. 

In Section~\ref{sec:stret} we showed that we can describe the line arrangement problem as an instance of stretchability of pseudo-lines in which each pseudo-line is composed of a chain of segments and the break-point vertices of these chains(except the first and the last endpoints of the chains) correspond to the intersection points of the pseudo-lines. We build a visibility graph $\mathcal{G}$, an external cycle $\mathcal{P}$, and a set of boundary cycles $\mathcal{H}$ from an instance of such a stretchability problem, and prove that the pseudo-line arrangement is stretchable in the plane if and only if there exists a polygon with holes whose visibility graph is $\mathcal{G}$, its external cycle is $\mathcal{P}$ and the set of boundary cycles of its holes is $\mathcal{H}$.

Assume that $(\mathcal{L}, \mathcal{S})$ is an instance of the stretchability problem where, as described in Algorithm~\ref{alg:A}, $\mathcal{L}=\langle l_1,l_2,...,l_n \rangle$ is the sequence of the pseudo-lines and $\mathcal{S}=\langle S_1,S_2,...,S_n \rangle$ is the sequence of the intersections of these pseudo-lines in which  $S_i=\langle l_{a(i,1),...,a(i,n-1)}\rangle$ is the order of lines intersected by $l_i$. Let denote by $(\mathcal{G},\mathcal{P},\mathcal{H})$ the corresponding instance of the visibility graph realization in which $\mathcal{G}$ is the visibility graph, $\mathcal{P}$ is the external cycle of the outer boundary of the polygon and $\mathcal{H}=\{H_1,H_2,...,H_k\}$ is the set of boundary cycles of its holes. To build this instance, consider an example of such $(\mathcal{L},\mathcal{S})$ instance shown in Fig.~\ref{fig:4}-a. This figure shows a pseudo-line realization obtained from Algorithm~\ref{alg:A} for an instance of four pseudo-lines. If this instance is stretchable, like the one shown in Fig.~\ref{fig:4}-b, we can build a polygon with holes like the ones shown in Fig.~\ref{fig:4}-c. The outer boundary of this polygon and the boundary of its holes lie along a set of convex curves connecting the endpoints of each stretched pseudo-line. Precisely, for each stretched pseudo-line $l_i$, as in Fig.~\ref{fig:4}-b, there is a pair of convex chains on both sides that connect its endpoints. This pair of convex chains are sufficiently close to their corresponding stretched pseudo-line. For each break-point vertex of the pseudo-lines, there are four intersection points on their corresponding chains(like point $o$ in Fig.~\ref{fig:4}-c). The pair of convex chains, for each pseudo-line $l_i$, make a convex polygon which is called its channel and is denoted by $Ch(l_i)$. The outer boundary of the target polygon and the boundary of its holes are obtained by removing those segments of the chains that lie inside another channel (see Fig.~\ref{fig:4}-c). Note that, we do not have the stretched realization of $(\mathcal{L}, \mathcal{S})$ instance of the stretchability problem. But, from the pseudo-line realization, we can determine $\mathcal{G}$, $\mathcal{P}$ and $\mathcal{H}$ of the corresponding instance $(\mathcal{G},\mathcal{P},\mathcal{H})$ in polynomial time. As shown in Fig.~\ref{fig:4}-d, $\mathcal{P}$ and $\mathcal{H}$ are obtained by imaginary drawing a channel for each pseudo-line $l_i$. Finally, the vertex set of $\mathcal{G}$ is the set of all break-point vertices of these convex chains, and, two vertices are connected by an edge if and only if they belong to the boundary of the same channel. The following theorem shows the relationship between $(\mathcal{L},\mathcal{S})$ and $(\mathcal{G},\mathcal{P},\mathcal{H})$ instances.

\begin{figure}[ht]
\centerline{\includegraphics[scale=0.5]{./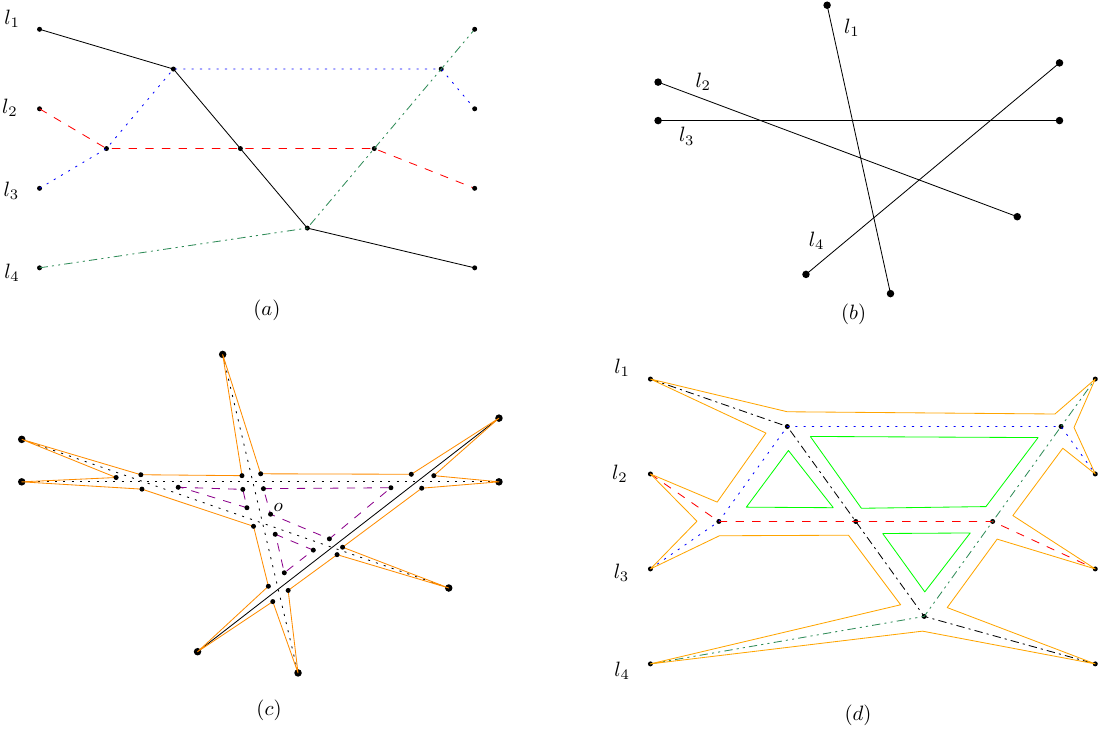}}
\caption{A polygon with holes which is constructed from an instance of the \textit{PseudoLineArrangement} problem.\label{fig:4} }
\end{figure}

\begin{theorem}
\label{thm:1_1}
An instance $(\mathcal{L},\mathcal{S})$ of the stretchability problem is realizable if and only if its corresponding $(\mathcal{G},\mathcal{P},\mathcal{H})$ instance of the visibility graph is recognizable.
\end{theorem}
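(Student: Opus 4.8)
The plan is to prove the two directions of the equivalence separately, exploiting the fact (from Section~\ref{sec:stret}) that stretchability is exactly the question of whether a pseudo line arrangement admits a straight-line realization, while the combinatorial data $\mathcal{S}$ and the boundary cycles $\mathcal{P},\mathcal{H}$ are already fixed by the construction.

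For the forward direction, suppose $(\mathcal{L},\mathcal{S})$ is stretchable, and fix a straight-line realization. First I would build the polygon explicitly: around each line $l_i$ I place two convex chains, one slightly above and one slightly below, whose break-points are exactly the intersection points lying on $l_i$, and take their union to be the thin convex channel $Ch(l_i)$. The filled region of the target polygon is the union of all channels, so that $\mathcal{P}$ is the outer boundary of this union and $\mathcal{H}$ are the bounded complementary components. The verification then reduces to two visibility claims. Within a single channel, convexity of $Ch(l_i)$ together with the fact that it lies inside the polygon gives that every pair of its boundary vertices sees each other, accounting for all edges of $\mathcal{G}$. Between two distinct channels, I would argue that the chains can be chosen thin enough that any segment joining a vertex of $Ch(l_i)$ to a vertex of $Ch(l_j)$ must leave the union of channels---passing through a hole of $\mathcal{H}$ or the exterior---so that no spurious edges appear. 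Matching $\mathcal{P}$ and $\mathcal{H}$ against the constructed boundaries then shows the polygon realizes $(\mathcal{G},\mathcal{P},\mathcal{H})$.

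For the backward direction, suppose a polygon with holes $\Pi$ realizes $(\mathcal{G},\mathcal{P},\mathcal{H})$. Since the boundary cycles are part of the input, $\Pi$ is forced to decompose into the same channel structure: the vertices of each channel form a prescribed cycle, and distinct channels share exactly the vertices corresponding to their intersections, appearing in the order dictated by $\mathcal{S}$. The first key step is to show that each channel is realized as a \emph{convex} region, since all vertices of a channel are mutually adjacent in $\mathcal{G}$ and hence mutually visible in $\Pi$, and a simple polygonal region all of whose boundary vertices see one another can have no reflex vertex. The second step is to read off, from the shared intersection vertices and their prescribed orders, a pseudo line arrangement combinatorially equal to $(\mathcal{L},\mathcal{S})$, taking the axis of each convex channel as the corresponding pseudo line. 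The final and crucial step is to promote this to a straight-line arrangement: here I would use convexity decisively, arguing that a thin convex channel cannot bend, so that its axis is straightenable while preserving the crossing order and the straight line through the two endpoints of each channel yields a true line realizing $(\mathcal{L},\mathcal{S})$.

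I expect the main obstacle to be precisely this last step of the backward direction. Pseudo line arrangements are always realizable (Algorithm~\ref{alg:A}), so the reduction is faithful only if the visibility and convexity constraints genuinely encode the strictly harder stretchability condition rather than mere pseudo line realizability. The delicate point is therefore to prove that the convexity of every channel, together with the shared intersection vertices enforced by $\mathcal{G}$, leaves no freedom to bend any channel's axis, so that the combinatorial arrangement extracted from $\Pi$ must in fact be stretchable. A secondary technical care is needed in the forward direction, namely to choose the channel widths uniformly small enough that no two non-incident channels create a spurious sight line, which I would handle by a limiting argument over the finitely many candidate blocked segments.
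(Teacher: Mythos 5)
Your proposal follows essentially the same route as the paper's proof: the forward direction builds sufficiently thin convex channels around the stretched lines so that visibility holds exactly within channels, and the backward direction uses completeness of each channel's induced subgraph to force convexity, then takes the straight chord through each channel's endpoints, with the four shared vertices of each channel pair forcing the required crossings in the order prescribed by $\mathcal{S}$. The step you flag as the main obstacle is resolved in the paper exactly as you sketch it, so your plan is sound and matches the published argument.
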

\begin{proof}
When $(\mathcal{L},\mathcal{S})$ is stretchable, we can obtain a polygon with holes from the realization of $(\mathcal{L},\mathcal{S})$ whose external and holes boundaries are respectively correspond to $\mathcal{P}$ and $\mathcal{H}$. On the other hand, when the channels are sufficiently narrow and close to their line segments, each pair of vertices see each other if and only if they belong to the same channel. This means that their visibility graph is $\mathcal{G}$ and this polygon with holes is a realization for $(\mathcal{G},\mathcal{P},\mathcal{H})$ instance.

To prove the theorem in reverse, assume that the instance $(\mathcal{G},\mathcal{P},\mathcal{H})$ is realizable and we have a polygon $\mathcal{P}$ with holes $\mathcal{H}$ whose visibility graph is $\mathcal{G}$. In this realization, for the pair of endpoints of each channel consider the line that connects this pair of points.
We claim that this set of lines is an answer for $(\mathcal{L},\mathcal{S})$ instance of the stretchability problem.

The induced subgraph of $\mathcal{G}$ on the vertices of a channel $C$ is a complete graph which implies that in the realization of $(\mathcal{G},\mathcal{P},\mathcal{H})$ these vertices must lie on the boundary of a convex polygon. On the other hand, each pair of channels $C_i$ and $C_j$ has exactly four vertices in common. Denote these common vertices by $Int(C_i,C_j)$. This set of common vertices forces that their corresponding lines (the lines that pass through the endpoints of these channels) must intersect at one point. Therefore, each pair of the obtained lines intersect at a point. To complete the proof, we must show that the intersections of these lines follow $\mathcal{S}$, and their initial vertical order is as $\mathcal{L}$. Consider the corresponding line $l$ of a channel $C$. The order of intersection points of $l$ and corresponding lines of other channels is directly derived from the order of common vertices between $C$ and other channels along the boundary of $C$. While these common vertices are either the vertices of $\mathcal{P}$ or cycles in $\mathcal{H}$, we can identify their order uniquely along the boundary of channel $C$. To do this, we first obtain the common vertices $Int(C,C_i)$, which have two vertices on $\mathcal{P}$. This pair of vertices are connected to an endpoint of the channel $C$. This means that $l$ is first intersected by the corresponding line of $C_i$. Let $p,q$ be the other vertices in $Int(C,C_i)$. These vertices either belong to $\mathcal{P}$ or a cycle in $\mathcal{H}$ and in both cases the next intersected channel is determined. When $p$ lies on $\mathcal{P}$, the next intersected channel is $C_j$ where $Int(C,C_j)$ has a vertex adjacent to $p$ in $\mathcal{P}$, and, when $p$ lies on a cycle $H \in \mathcal{H}$, the next intersected channel is $C_j$ where $Int(C,C_j)$ has a vertex adjacent to $p$ in $H$. Continuing this procedure, the order of intersection points of $l$ with other lines are uniquely determined which exactly is the same as $\mathcal{S}$. The reason is that we have built the channel and their common vertices according to $\mathcal{L}$. Finally, the initial order of these lines is the same as the order of endpoints of their corresponding channels along $\mathcal{P}$. Therefore, the initial vertical order of these lines will follow $\mathcal{L}$ by properly rotating the realization of the polygon for $(\mathcal{G},\mathcal{P},\mathcal{H})$.
\end{proof}

It is easy to show that recognizing the visibility graph of a polygon with holes belongs to $\exists \Real$. On the other hand, the stretchability problem is  $\exists \Real\text{-}Complete$, and our reduction is polynomial. Then, Theorem~\ref{thm:1_1} implies the $\exists \Real\text{-}Hardness$ of recognizing the visibility graph of a polygon with holes. Therefore, we have the following theorem.

\begin{theorem}
\label{thm: 1_2}
Recognizing the visibility graph of a polygon with holes is  $\exists \Real\text{-}Complete$.
\end{theorem}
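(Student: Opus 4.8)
The plan is to establish the two halves of $\exists\Real$-completeness separately: membership in $\exists\Real$ and $\exists\Real$-hardness, with the latter inherited almost entirely from Theorem~\ref{thm:1_1}.

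For membership, I would exhibit a quantifier-free formula over the reals, with existentially quantified variables, that is satisfiable exactly when the input $(\mathcal{G},\mathcal{P},\mathcal{H})$ is realizable. Introduce two real variables $x_v,y_v$ for the coordinates of each vertex $v$ of $\mathcal{G}$, so that the number of variables is linear in the size of the input. The formula asserts three families of conditions, all expressible as Boolean combinations of polynomial equalities and inequalities of bounded degree in these coordinates: (i) the cyclic sequence $\mathcal{P}$ together with the cycles of $\mathcal{H}$ traces out a valid polygon with holes, meaning each boundary is simple (no two of its edges properly cross, a condition written with segment--segment intersection predicates, which are sign conditions on $2\times 2$ orientation determinants), each hole lies inside the outer boundary, and the holes are pairwise disjoint; (ii) for every edge $(p,q)$ of $\mathcal{G}$, the open segment $\overline{pq}$ lies in the interior of the polygon, i.e. it crosses no boundary edge and an orientation test against the incident boundary edges certifies that it enters the interior; and (iii) for every non-edge $(p,q)$, the segment $\overline{pq}$ is blocked, i.e. it properly crosses some boundary edge or otherwise leaves the polygon. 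Each family is a finite conjunction and disjunction of polynomial sign conditions, and there are only polynomially many of them, so the whole formula has polynomial length and the problem lies in $\exists\Real$.

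For hardness, Theorem~\ref{thm:1_1} already shows that $(\mathcal{L},\mathcal{S})$ is a stretchable instance if and only if the corresponding $(\mathcal{G},\mathcal{P},\mathcal{H})$ is a realizable instance, so the map $(\mathcal{L},\mathcal{S})\mapsto(\mathcal{G},\mathcal{P},\mathcal{H})$ is a valid many-one reduction from stretchability. It remains only to confirm that this map is computable in polynomial time: the pseudo-line realization is produced by Algorithm~\ref{alg:A} in polynomial time, the channels and their break-points are read directly off that realization, and the vertex set of $\mathcal{G}$, its edges, the external cycle $\mathcal{P}$, and the hole cycles $\mathcal{H}$ are each of size polynomial in $n$ and can be extracted by inspecting which break-points share a channel. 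Since stretchability is $\exists\Real$-complete, this reduction yields $\exists\Real$-hardness, and combined with membership it gives $\exists\Real$-completeness.

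The main obstacle I anticipate is not the hardness direction---which is inherited wholesale from Theorem~\ref{thm:1_1}---but writing the membership formula so that it is genuinely \emph{correct} rather than merely plausible. One must ensure the visibility predicate (a segment lying strictly inside the polygon with holes) is captured exactly, including degenerate alignments, collinear triples, and segments that graze a hole boundary, and that the encoding of ``valid polygon with holes'' rules out self-intersections as well as nested or overlapping holes. Fortunately, all of these are standard semialgebraic predicates, so the difficulty here is careful bookkeeping rather than any genuine algebraic barrier.
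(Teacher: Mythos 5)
Your proposal matches the paper's argument exactly: the paper also obtains $\exists\Real$-hardness directly from Theorem~\ref{thm:1_1} together with the observation that the reduction is polynomial-time, and dismisses membership in $\exists\Real$ as "easy to show." Your sketch of the membership formula is in fact more detailed than what the paper provides, but it is the standard semialgebraic encoding the authors implicitly have in mind, so the two proofs coincide in substance.
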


\section{Complexity of Internal-External Visibility Graphs of a Simple Polygon}
\label{sec:D}
In this section, we show that recognizing the internal-external visibility graph of a simple polygon is $\exists\Real\text{-}Complete$. Again, we prove this by reducing an instance $(\mathcal{L},\mathcal{S})$ of the stretchability problem to an instance $(\mathcal{G}_{int},\mathcal{G}_{ext},\mathcal{P})$ of the visibility graph recognition where $\mathcal{G}_{int}$ and $\mathcal{G}_{ext}$ are respectively the internal and external visibility graphs and $\mathcal{P}$ is the external cycle of the boundary of the target simple polygon. Our construction of $(\mathcal{G}_{int},\mathcal{G}_{ext},\mathcal{P})$ is similar to the construction described in Section~\ref{sec:C}. We first build the same polygon with holes as in Section~\ref{sec:C} with this difference that on each segment $ab$ on the boundary of a hole in $\mathcal{H}$, which $a$ and $b$ are intersection points of two convex chains, another point $c$ is added(Fig.~\ref{fig:5}-a) without violating the convexity of the channel chains. Then, we connect the holes together and to the external area to remove the holes and obtain a simple polygon. This is done iteratively on each hole by adding a pair of parallel and sufficiently close \textit{cut edges} that connect a boundary segment of a hole to a segment of the current outer boundary. These segments must belong to different chains of the same channel and the pair of cut edges are close enough such that no pair $(p,q)$, in which $p$ and $q$ are not on the boundary of the same hole, be internally or externally visible to each other. These cut edges act as cutting channels like Fig.~\ref{fig:5}-b.

Trivially, the visibility graph of this simple polygon is no longer the same as the one obtained in Section~\ref{sec:C}. Having the stretched realization, we can compute internal and external visibility graphs of this simple polygon, which mainly depends on the way we cut the channels to remove the holes. In this polygon, two vertices are internally visible if and only if they belong to the same channel and their connecting segment does not cross any cut edge. In the external visibility graph, two vertices have an edge if and only if one of the following conditions holds:
\begin{itemize}
\item they are adjacent vertices on $\mathcal{P}$(points $a$ and $b$ in Fig.~\ref{fig:5}-b),
\item they are endpoints of the edges of the same cut(points $c$ and $d$ in Fig.~\ref{fig:5}-b),
\item they are vertices on different channel chains of the same hole (holes before cutting)(points $e$ and $f$ in Fig.~\ref{fig:5}-b),
\item they lie on different channel chains and between two consecutive endpoints of the channels on $\mathcal{P}$(before cutting but including cut vertices)(points $g$ and $h$ or $g$ and $i$).
\end{itemize}

\begin{figure}[ht]
\centerline{\includegraphics[scale=0.55]{./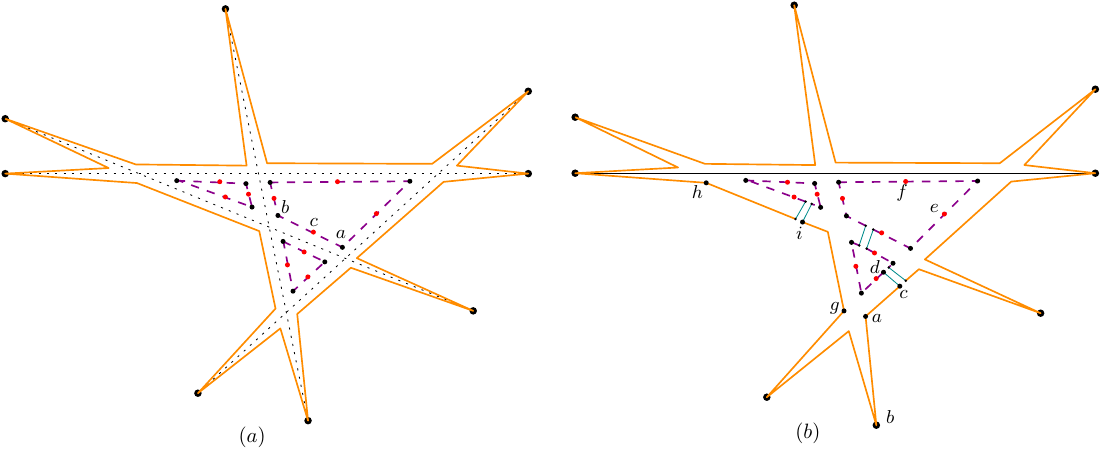}}
\caption{A simple polygon constructed from an instance of the \textit{PseudoLineArrangement} problem.\label{fig:5} }
\end{figure}

Trivially, if $(\mathcal{L},\mathcal{S})$ is stretchable, its corresponding instance $(\mathcal{G}_{int},\mathcal{G}_{ext},\mathcal{P})$ is recognizable. To prove the reverse equivalence, we show in the next theorem that in any realization of $(\mathcal{G}_{int},\mathcal{G}_{ext},\mathcal{P})$ the boundary vertices of each channel must be a convex polygon. Then, by an argument similar to the one in Theorem~\ref{thm:1_1}, the stretchability of $(\mathcal{L},\mathcal{S})$ is equivalent to the realization of $(\mathcal{G}_{int},\mathcal{G}_{ext},\mathcal{P})$ as a simple polygon.

\begin{theorem}
\label{thm:2_1}
In any realization of the internal-external visibility graph $(\mathcal{G}_int,\mathcal{G}_ext,\mathcal{P})$, the lower and upper chains of each channel $C$ are convex.
\end{theorem}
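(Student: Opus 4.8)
The plan is to run the convexity argument of Theorem~\ref{thm:1_1} and then repair it at the places where the cut edges destroy internal visibility. The engine of Theorem~\ref{thm:1_1} was that a set of pairwise internally visible vertices must lie in convex position, since its induced internal-visibility subgraph is complete. Here this fails for a whole channel: a cut edge crossing $C$ blocks the segment between vertices on opposite sides of it, so $\mathcal{G}_{int}$ restricted to $C$ is complete only inside each maximal cut-free section. First I would isolate these sections. Within one section every pair of vertices is internally visible, so by the principle of Theorem~\ref{thm:1_1} the section's vertices lie on the boundary of a convex polygon; in particular the portions of the upper and lower chains contained in a single section are already convex. The real content of the theorem is thus to glue these convex sections across the cut edges without creating a reflex vertex.

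This is where the external visibility graph enters, the idea being that whatever internal visibility loses at a cut, external visibility supplies. At a cut the two endpoints $c$ and $d$ of the cut edges are externally visible, and a vertex of the upper chain and a vertex of the lower chain flanking the same hole are externally visible through that hole. In any realization these external edges force the corresponding segments to lie in the exterior, i.e.\ inside the hole or the cut corridor, and I would use these outside-lying segments as certificates that the two chains bound their common region convexly. The heuristic is that if a chain vertex at which a cut attaches were reflex, the chain would bulge into the channel there, and one of the external segments forced to pass through the adjacent exterior region would be driven across the polygon boundary, contradicting its external visibility.

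Carrying this out, I would proceed vertex by vertex. For an interior chain vertex $r$ whose two chain-neighbours $p,q$ are internally visible, the diagonal $pq$ lies inside the polygon and $prq$ is an ear, so $r$ is convex exactly as in the simple-polygon case. For a vertex $r$ at which a cut attaches, so that $p$ and $q$ are separated by the cut and are not internally visible, I would substitute for the missing diagonal $pq$ the external edges incident to the cut vertices $c,d$ together with the cross-chain external edges of the holes meeting at $r$; the requirement that all of these segments lie outside the polygon, combined with simplicity and the already-established convexity of the two adjacent sections, should pin $r$ down as a convex vertex. Running this over every interior vertex of the upper and lower chains shows that each chain turns consistently, hence is convex.

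I expect the main obstacle to be precisely this local analysis at the cut vertices: one must verify that the external-visibility certificates are strong enough to exclude a reflex turn, which requires a careful case analysis of how a cut corridor sits between the two convex sections it separates and of which external edges (among the types $c$--$d$, $e$--$f$, and $g$--$h$ in the list) are actually available to constrain it. Once a reflex turn is ruled out at every cut, the remaining interior vertices are dispatched by the routine ear argument, and the convexity of both chains follows.
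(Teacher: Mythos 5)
Your reduction of the problem to the cut vertices is exactly the paper's first step: away from the cuts, adjacent vertices of a channel are internally visible, so $\mathcal{G}_{int}$ already forces convexity there, and only the vertices where a cut attaches are in question. The gap is in your mechanism for handling those vertices. You propose to use the external visibility \emph{edges} (the $c$--$d$, $e$--$f$, $g$--$h$ pairs) as certificates, arguing that a reflex cut vertex would drive one of these exterior segments across the boundary. But a cut vertex that is reflex with respect to the channel bulges \emph{into} the channel's interior and therefore \emph{recedes} from the adjacent exterior region (the former hole and the cut corridor); if anything this enlarges the exterior locally and makes the listed external visibilities easier, not harder, to realize. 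You have not exhibited any external edge that a reflex turn would actually obstruct, and you flag this yourself as the unverified obstacle --- but it is not a technicality to be checked, it is the place where a different idea is needed.

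The certificate the paper uses is the opposite one: an external \emph{non}-edge. With $O_1,O_2$ the cut vertices and $A_1$ the extra vertex added on the adjacent hole segment, the pair $(A_1,O_2)$ is invisible in $\mathcal{G}_{ext}$, so by the second necessary condition it must have a blocking vertex. Using Observation~\ref{obs:2} and the extra vertices $B_1,B_2$ placed on the other hole segments, the clockwise walk from $A_1$ to $O_2$ is shown to contain no blocking vertex, so $O_1$ is the \emph{unique} blocking vertex of $(A_1,O_2)$; being forced to actually block that pair pins $O_1$ to the far side of the segment $A_1O_2$, i.e.\ makes it convex, and symmetrically for $O_2$. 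Note that the extra point inserted on every hole-boundary segment in the construction of $(\mathcal{G}_{int},\mathcal{G}_{ext},\mathcal{P})$ exists precisely to supply the vertices $A_1,B_1,B_2$ for this argument; your proposal never uses those points, which is a further sign that the blocking-vertex mechanism, rather than the positive external edges, is what carries the proof.
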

\begin{proof}
For the sake of a contradiction, assume that there is a channel $C$ in the realization of $(\mathcal{G}_{int},\mathcal{G}_{ext},\mathcal{P})$ that is not a convex polygon. Except for the vertices of the cut edges, for each vertex of $C$, its pair of adjacent vertices on this channel are visible from each other in $\mathcal{G}_{int}$. Therefore, such a concavity on the boundary of $C$ can occur only on some vertices of the cut edges that cross this channel. Without loss of generality, assume that such a concavity happens on the boundary of $C$ in Fig.~\ref{fig:6}, when we connect $O_1$ to $O_2$. The vertices $A_1$ and $O_2$ are non-visible pairs in $\mathcal{G}_{ext}$. According to Observation~\ref{cond:2}, they have at least one and at most two blocking vertices, one on the clockwise and the other on the counter-clockwise walks from $A_1$ to $O_2$. The vertex $O_1$ is such a blocking vertex on the counter-clockwise walk from $A_1$ to $O_2$. On the other hand, the last visible vertex from $A_1$ in $\mathcal{G}_{ext}$ along the clockwise walk on $\mathcal{P}$ toward $O_2$ is a vertex like $B_2$ and the last visible vertex from $O_2$ in $\mathcal{G}_{ext}$ along the counter-clockwise walk on $\mathcal{P}$ toward $A_1$ is another vertex like $B_1$ which is farther than $B_2$ from $O_2$. Note that $O_1$ and $O_2$ lie on the boundary of the same hole before cutting and we have added extra vertices on all boundary segments of this hole. The vertices $A_1$, $B_1$, and $B_2$ are such extra and distinct vertices. Then, according to Observation~\ref{obs:2}, the non-visible pair $(A_1,O_2)$ does not have a blocking vertex in the clockwise walk from $A_1$ to $O_2$. Therefore, in any realization of $(\mathcal{G}_{int},\mathcal{G}_{ext},\mathcal{P})$, $O_1$ is the only blocking vertex of $A_1$ and $O_2$ which means that $O_1$ lies above the segment $A_1O_2$ and vertex $O_1$ is a convex vertex on the boundary of $C$. The same argument implies that $O_2$ is also a convex vertex on the boundary of $C$, which contradicts our assumption that $C$ is concave on $O_1$ or $O_2$.
\end{proof}

\begin{figure}[ht]
\centerline{\includegraphics[scale=0.4]{./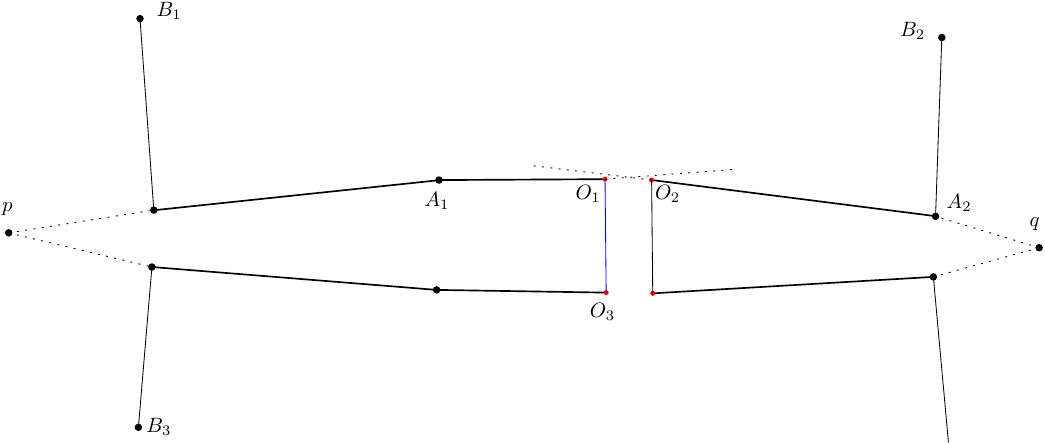}}
\caption{Upper and lower chains of a pseudo-line and their blocking vertices. \label{fig:6} }
\end{figure}

The above theorem implies that in any realization of $(\mathcal{G}_{int},\mathcal{G}_{ext},\mathcal{P})$, all channels are convex polygons. Therefore, by the same argument as Theorem~\ref{thm:1_1},  $(\mathcal{L},\mathcal{S})$ is stretchable if and only if $(\mathcal{G}_{int},\mathcal{G}_{ext},\mathcal{P})$ is recognizable. On the other hand, this reduction can be done in polynomial time and implies the $\exists \Real\text{-}Hardness$ of recognizing internal-external visibility graphs of a simple polygon. It is easy to show that recognizing internal-external visibility graphs of a simple polygon belongs to $\exists \Real$,  and we obtain the following theorem.

\begin{theorem}
\label{thm:2_2}
Recognizing internal-external visibility graphs of a simple polygon is $\exists \Real\text{-}Complete$.
\end{theorem}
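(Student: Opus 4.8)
The plan is to prove the two halves of $\exists \Real-Completeness$ separately: membership in $\exists \Real$ and $\exists \Real-Hardness$. The structural content has already been isolated in Theorem~\ref{thm:2_1}, which guarantees that every channel is realized as a convex polygon; the remaining work is to turn this guarantee into a clean equivalence and to verify the two complexity-theoretic endpoints.

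For membership, I would encode a candidate realization by assigning to each vertex of $\mathcal{G}_{int}$ a pair of real coordinates $(x_v,y_v)$, so that a realization is a point in $\Real^{2N}$, where $N$ is the number of vertices. The conditions defining a valid realization are all expressible as Boolean combinations of polynomial equalities and inequalities: that the boundary follows the cycle $\mathcal{P}$ and bounds a simple polygon (pairwise non-crossing of boundary edges, expressible through orientation determinants), that each edge of $\mathcal{G}_{int}$ is realized by an internally visible pair and each non-edge by an internally blocked pair, together with the analogous conditions for $\mathcal{G}_{ext}$. Since visibility of a segment reduces to sign conditions on orientation determinants and to segment-intersection predicates, the whole instance is captured by a single quantifier-free formula over the reals, placing the problem in $\exists \Real$.

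For $\exists \Real-Hardness$, I would reduce from stretchability exactly through the construction of $(\mathcal{G}_{int},\mathcal{G}_{ext},\mathcal{P})$ already described, which is computable in polynomial time from $(\mathcal{L},\mathcal{S})$. The forward direction is immediate: a stretched realization of $(\mathcal{L},\mathcal{S})$ yields the simple polygon whose internal and external visibility graphs are $\mathcal{G}_{int}$ and $\mathcal{G}_{ext}$. For the reverse direction I would invoke Theorem~\ref{thm:2_1} to conclude that in any realization every channel is a convex polygon, and then replay the extraction argument of Theorem~\ref{thm:1_1}: the vertices of a channel, forming a convex polygon, determine a supporting line through the channel endpoints, each pair of channels shares its four common vertices which force the corresponding lines to cross, and the order of these crossings along a channel line can be read off from the cyclic order of common vertices along the channel boundary. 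This recovers a line arrangement whose intersection sequence is $\mathcal{S}$ and whose initial vertical order is $\mathcal{L}$, so $(\mathcal{L},\mathcal{S})$ is stretchable.

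The hardest step is the reverse-direction recovery of the order of intersections $\mathcal{S}$. In Theorem~\ref{thm:1_1} this order was read directly off the boundary cycles $\mathcal{H}$ of the holes, but here the holes have been merged into a single simple polygon by the cut edges, so $\mathcal{H}$ is no longer part of the input. The delicate point is therefore to show that after adding the extra vertices $c$ on each hole segment and threading the cut edges, the combinatorial adjacency data of $(\mathcal{G}_{int},\mathcal{G}_{ext},\mathcal{P})$ still determines the original channel structure uniquely --- that the cut edges and their endpoints can be distinguished and removed combinatorially, restoring the sequence of common vertices along each channel boundary. Once this is established, the remainder follows the template of Theorem~\ref{thm:1_1}, and combining membership in $\exists \Real$ with the polynomial reduction yields the stated completeness.
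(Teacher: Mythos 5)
Your proposal follows essentially the same route as the paper: membership via a quantifier-free formula over the vertex coordinates, the polynomial-time construction of $(\mathcal{G}_{int},\mathcal{G}_{ext},\mathcal{P})$ from $(\mathcal{L},\mathcal{S})$, Theorem~\ref{thm:2_1} to force convexity of the channels, and a replay of the extraction argument of Theorem~\ref{thm:1_1} for the reverse direction. The ``delicate point'' you flag --- recovering the channel structure and the intersection order $\mathcal{S}$ combinatorially after the holes have been merged by the cut edges --- is precisely the step the paper also leaves implicit in the phrase ``by the same argument as Theorem~\ref{thm:1_1}'', so your treatment matches the paper's proof in both structure and level of detail.
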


\section{The Recognition Problem in Presence of a  \textit{BlockingVertexAssignment}}
\label{sec:BVA}
In this section, we discuss the effect of having a \textit{BlockingVertexAssignment} on the difficulty of the visibility graph recognition problem, \textit{i.e.} the problem can be solved easier when we have a \textit{BlockingVertexAssignment} as a part of the input. Recall that, \textit{BlockingVertexAssignment} of a visibility graph of a polygon is a function $\beta:V^2 \rightarrow V$ that assigns a vertex to each non-visible ordered pair of vertices. This function was introduced by Abello, \textit{et al.}, in 1994~\cite{matroid} and indicates which vertex is the first one that blocks the visibility of an ordered pair of vertices. This blocking vertex is visible to the first vertex in the ordered non-visible pair and it is clear that every non-visible pair must have such a blocking vertex. There is another way to define a blocking vertex of a non-visible ordered pair: it is the first vertex in the shortest Euclidean path inside the polygon from the first to the second vertex of a non-visible ordered pair. Such a path is unique inside a simple polygon, but, there can be more than one such path between a pair of vertices of a polygon with holes. Moreover, two vertices on the boundary of a simple polygon can have two different shortest paths that lie outside the polygon. Therefore, \textit{BlockingVertexAssignment} is not well-defined for the visibility graph of a polygon with holes and for the external visibility graph of a simple polygon. Abello \textit{et al.}, showed that a \textit{BlockingVertexAssignment} must have four conditions, which are necessary conditions for a visibility graph to be recognizable. Later in 1997, Ghosh showed that these conditions are strictly stronger than Ghosh's first three necessary conditions, but, strictly weaker than Ghosh's four necessary conditions~\cite{ghoshn}. In 1995, Abello \textit{et al.} introduced a polynomial time algorithm that recognizes the visibility graph of a 2-spiral polygon from its visibility graph when we have \textit{BlockingVertexAssignment} as part of the input~\cite{twospiral}. This result initiated the question of whether having such a \textit{BlockingVertexAssignment} function as a part of the input makes the recognizing visibility graphs problem easier (in terms of computational complexity) or not. In this section, we show that recognizing the visibility graph problem for simple polygons is still $\exists \Real\text{-}Complete$ when we have both internal and external visibility graphs and a \textit{BlockingVertexAssignment} function for its internal visibility graph as input.

\begin{theorem}
\label{thm:3}
The internal visibility graph of each polygon constructed from the \textit{PseudoLineArrangement} problem has a unique \textit{BlockingVertexAssignment} function and can be computed in polynomial time.
\end{theorem}
\begin{proof}
\label{proof_thm:3}
Consider a non-visible ordered pair $(a,b)$ and a blocking vertex assignment function $\mathcal{B}$, where $a$ belongs to a channel $C$. Like any non-visible ordered pair, there are two candidates for $\mathcal{B}(a,b)$: the last visible vertex in clockwise (name it $w$) and counter-clockwise (name it $c$) walks from $a$ toward $b$ on the boundary of the polygon. Both $w$ and $c$ are visible from $a$. Ghosh showed that these two blocking vertex candidates see each other~\cite{ghoshn}. Therefore, $a$, $c$ and $w$ see each other and belong to the channel $C$. 

On the other hand, $c$ and $w$ do not belong to the same common vertices of $Int(C,C_j)$. For each pair of channels $(C_i, C_j)$, all vertices of $Int(C_i,C_j)$ are visible from the previous and next vertices adjacent to the other three vertices in $Int(C_i,C_j)$. Therefore, by using Observation~\ref{obs:2}, if more than one of these three vertices belongs to the clockwise (resp. counter-clockwise) walk from $a$ toward $b$, $(a,b)$ has no blocking vertex in clockwise (resp. counter-clockwise) walk from $a$ toward $b$.

Let's denote the three vertices which are in the same intersection set with $c$ by $o_1$, $o_2$ and $o_3$ and those which are in the same intersection set with $w$ by $o^\prime_1$, $o^\prime_2$ and $o^\prime_3$. If both $c$ and $w$ are blocking vertices, $o_1$, $o_2$ and $o_3$ belong to the clockwise walk from $a$ toward $b$ and in this walk, we must meet them before $w$. Similarly, we must meet $o^\prime_1$, $o^\prime_2$ and $o^\prime_3$ before $c$ in the counter-clockwise walk from $a$ toward $b$. Hence, vertices $a$, $b$, $c$, and $w$ have an arrangement similar to the arrangement shown in Fig.~\ref{fig:7}. As shown in this figure, we cannot cross the convex area surrounded by vertices of the channel $C$ which are visible from $a$ (like the area surrounded by dotted orange, blue, and green curves in Fig.~\ref{fig:7}). In all arrangements of these points, some of the adjacent vertices of at least one of these eight intersection points will be trapped between the orange curve, the clockwise walk from $w$ toward $b$, and the counter-clockwise walk from $c$ toward $b$ (see $x$ and $y$ in Fig.~\ref{fig:7}). Therefore, we cannot meet these vertices before $w$ or $c$ in a clockwise or counter-clockwise walk from $a$ toward $b$ without crossing the aforementioned convex area. Consequently, we cannot have more than one visible blocking vertex from $a$ for the ordered non-visible pair $(a,b)$. Hence, the \textit{BlockingVertexAssignment} for this visibility graph is unique.

\begin{figure}[ht]
\centerline{\includegraphics[scale=0.9]{./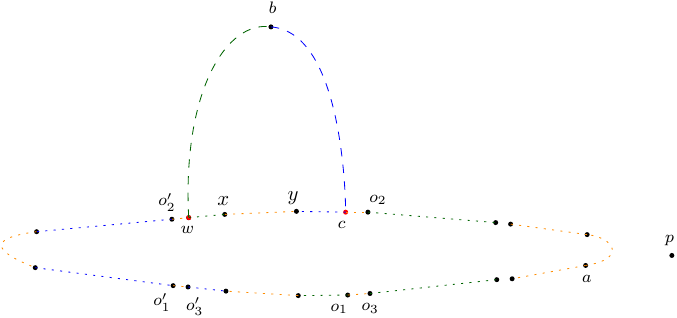}}
\caption{Determining \textit{BlockingVertexAssignment}: The blue dotted path corresponds to a part of the counter-clockwise walk from $a$ toward $c$ and the green dotted path corresponds to a part of the clockwise walk from $a$ toward $w$. The blue dashed curve corresponds to the counter-clockwise curve from $c$ toward $b$ and the green dashed curve corresponds to the clockwise curve from $w$ toward $b$. The orange curve corresponds to the convex area which is visible from $a$.\label{fig:7} }
\end{figure}

We only use the first two of Ghosh's necessary conditions to find blocking vertices and these conditions can be verified in polynomial time~\cite{ghoshn}. So we can compute the \textit{BlockingVertexAssignment} for this kind of polygons in polynomial time. 
\end{proof}

This theorem along with Theorem~\ref{thm:2_2} implies that if we have both internal and external visibility graphs of a simple polygon and its \textit{BlockingVertexAssignment}, the recognition problem is still $\exists \Real\text{-}Complete$.

\begin{theorem}
Recognizing the internal-external visibility graph of a simple polygon is $\exists \Real\text{-}Complete$ even if we have a \textit{BlockingVertexAssignment} function other than both internal and external visibility graphs.
\end{theorem}

\section{Conclusion}
In this paper, we showed that the visibility graph recognition problem is $\exists \Real\text{-}Complete$ for polygons with holes. Moreover, we showed that having both internal and external visibility graphs of a simple polygon, it is $\exists \Real\text{-}Complete$ to recognize them as corresponding visibility graphs of a simple polygon. Although the actual complexity class of recognizing visibility graphs for simple polygons is still open, from this result we conjecture that the problem is $\exists \Real\text{-}Complete$. Note that having the external visibility graph reduces possible options for reconstructing a simple polygon of a given visibility graph which offers more hints to the reconstruction procedure (makes it easier) on one side and adds more constraints (makes it more difficult) on the other side. So, we cannot definitely determine the complexity class of the problem when we have only the internal visibility graph compared to the cases when we have the external visibility graph as well 

We also proved that having a \textit{BlockingVertexAssignment} for the internal visibility graph as part of the input does not reduce the complexity class of this problem. However, there is another long-standing open problem about the complexity of recognizing a simple polygon when we have a \textit{BlockingVertexAssignment} for its non-visible pairs of vertices. For the same reason, this result also motivates us to guess that this problem is $\exists \Real\text{-}Complete$ as well.

\bibliographystyle{alpha}
\bibliography{dmtcs}
\label{sec:biblio}

\end{document}